\documentclass[letterpaper, 10 pt, conference]{ieeeconf}

\usepackage[noadjust]{cite}
   
\pdfoutput=1

\IEEEoverridecommandlockouts                               
\overrideIEEEmargins

\usepackage{amsmath}
\usepackage{amssymb}
\usepackage{amsthm}
\usepackage{mathrsfs}
\usepackage{mathtools}
\usepackage{enumitem}
\usepackage{xfrac}
\usepackage{float,xcolor,color}
\usepackage[ruled,linesnumbered,vlined]{algorithm2e}
\let\oldnl\nl
\newcommand{\nonl}{\renewcommand{\nl}{\let\nl\oldnl}}
\usepackage{soul}
\usepackage{epstopdf}

\newcommand{\emnote}[1]{{\color{blue}[Erik -- #1]}}

\newcommand{\kznote}[1]{{\color{magenta}[Kaiqing -- #1]}}

\newcommand{\EE}{\mathbb{E}}

\newcommand{\Ns}{\mathcal{N}}
\newcommand{\Ms}{\mathcal{M}}
\newcommand{\Ts}{\mathcal{T}}

\newcommand{\Zs}{\mathcal{Z}}
\newcommand{\RR}{\mathbb{R}}

\newtheorem{theorem}{Theorem}
\newtheorem{lemma}{Lemma}
\newtheorem{assumption}{Assumption}
\newtheorem{definition}{Definition}

\DeclareMathOperator{\argmin}{argmin}

\makeatletter
\def\old@comma{,}
\catcode`\,=13
\def,{%
  \ifmmode%
    \old@comma\discretionary{}{}{}%
  \else%
    \old@comma%
  \fi%
}
\makeatother

\newtheorem{example}{Example}
\title{\LARGE \bf 
Approximate Equilibrium Computation  for Discrete-Time \\
Linear-Quadratic Mean-Field Games
} 

\author{Muhammad~Aneeq~uz~Zaman, Kaiqing~Zhang, Erik~Miehling, and Tamer~Ba{\c s}ar
\thanks{The authors are affiliated with the Coordinated Science Laboratory, University of Illinois at Urbana--Champaign Urbana, IL 61801.}
\thanks{Research supported  in part by AFOSR (FA9550-19-1-0353),   in part by ARL (W911NF-17-2-0196),and in part by ARO (W911NF-16-1-0485).}}


\begin{document}
\maketitle
\thispagestyle{empty}
\pagestyle{empty}

\begin{abstract}
While the topic of mean-field games (MFGs) has a relatively long history, heretofore there has been limited work concerning algorithms for the computation of equilibrium control policies. In this paper, we develop a computable policy iteration algorithm for approximating the mean-field equilibrium in linear-quadratic MFGs with discounted cost. Given the mean-field, each agent faces a linear-quadratic tracking problem, the solution of which involves a dynamical system evolving in retrograde time. This makes the development of forward-in-time algorithm updates challenging. By identifying a structural property of the mean-field update operator, namely that it preserves sequences of a particular form, we develop a forward-in-time equilibrium computation algorithm. Bounds that quantify the accuracy of the computed mean-field equilibrium as a function of the algorithm's stopping condition are provided. The optimality of the computed equilibrium is validated numerically. In contrast to the most recent/concurrent results, our algorithm appears to be the first to study \emph{infinite}-horizon MFGs with \emph{non-stationary} mean-field equilibria, though with focus on the linear quadratic setting.
\end{abstract}

\section{Introduction}
\label{sec:intro}
Recent years have witnessed the tremendous progress of operation, control, and learning  in multi-agent systems   \cite{shoham2008multiagent,wooldridge2009introduction,dimarogonas2011distributed,zhang2018fully,zhang2018finite}, where multiple agents strategically  interact with each other in a common environment, to  optimize either a common or individual long-term return. {Despite the substantial interest}, most {existing} \emph{algorithms} for multi-agent  systems suffer from \emph{scalability issue{s}}, {due to} their complexity {increasing} exponentially with the number of agents involved. This  issue has precluded the application of many algorithms {to systems with even a moderate number of agents, let alone} to real-world applications \cite{breban2007mean,couillet2012electrical}. 
 
One way to address {the scalability issue} is {to view the problem in the context of} \emph{mean-field games} (MFGs),   proposed {in} the seminal works {of} \cite{huang2006large,huang2003individual} and, {independently}, \cite{lasry2007mean}. {Under the}  mean-field  {setting}, the interactions among   {the}  agents  {are} approximately represented by the {distribution of} all agents' {states}, {termed the mean-field}, where the influence of each agent on the system is assumed to be infinitesimal in the large population setting. In fact, the more agents are involved, the more accurate the mean-field approximation is, offering an effective tool for addressing the scalability issue. Moreover, following the so-termed \emph{Nash certainty equivalence (NCE) principle} \cite{huang2006large}, the solution to an MFG, referred to as {a} \emph{mean-field equilibrium (MFE)}, can be {determined by} each agent   computing a best-response {control policy to some mean-field that is consistent with the aggregate behavior of all agents. This principle decouples  {the} process of finding the solution {of the game} into  {a} computation{al procedure} of {determining the} best-response {to} a fixed mean-field at  {the} agent {level}, and an update of the mean-field for all agents. In particular,  a  {straightforward} routine for computing the MFE proceeds as follows:  first, each agent  calculate{s} the optimal control, best-responding to some given mean-field, and then, after executing the control, the states are aggregated to update the mean-field.  {This} routine {is referred to} as the \emph{NCE-based approach}, which  {serves as the foundation for} our algorithm. 

Serving as a standard, but significant, benchmark for general MFGs, linear-quadratic MFGs (LQ-MFGs) \cite{huang2007large,bensoussan2016linear,huang2018linear} have been advocated in the literature. In particular, the cost function describing deviations in the state, from the mean-field, as well as the {cost for a given} control {effort} is assumed to be quadratic while the transition dynamics are assumed {to be} linear. Intuitively, the cost  {incentivizes} each agent to \emph{track} the collective behavior of the population, which, for any fixed mean-field, leads to a \emph{linear-quadratic tracking} (LQT) subproblem for each agent. Though simple in form, equilibrium computation in LQ-MFGs (most naturally posed in continuous state-action spaces) inherits most of the challenges from equilibrium computation in general MFGs. While much work has been done in the continuous-time setting \cite{huang2007large,bensoussan2016linear,huang2018linear}, the discrete-time counterpart has received considerably less attention. It appears that, {only the work of \cite{moon2014discrete} (which considered a model with \emph{unreliable communication}  {with an average cost criterion}) has studied a} discrete-time version of the model proposed in \cite{huang2007large}. {The formulation of the discrete-time model of our paper, and the associated equilibrium analysis, are in a setting distinct from \cite{moon2014discrete}, and constitute one of the contributions of the present work.}

There has been an increasing interest in developing (model-free) equilibrium-computation  algorithms for certain MFGs  \cite{subramanian2019reinforcement,guo2019learning,elie2019approximate,fu2019actor}; see \cite[Sec. 4]{zhang2019multi} for more a detailed summary. The closest setting to ours is   in the concurrent while independent work on learning for discrete-time LQ-MFGs   \cite{fu2019actor}. However, given any fixed mean-field,  \cite{fu2019actor} treats each agent's subproblem as a \emph{linear quadratic regulator (LQR) with drift}, which is different from the continuous-time formulation \cite{huang2007large,bensoussan2016linear,huang2018linear}. {This is made possible because they considered   mean-field {trajectories} that are constant  {in} time  {(also referred to as \emph{stationary} mean-fields)}}. This is in contrast to the LQT subproblems found in both the literature \cite{huang2007large,bensoussan2016linear,huang2018linear} and in our formulation. While the former admits a \emph{forward-in-time} optimal control that can be obtained using policy iteration   and standard reinforcement learning (RL) algorithms  \cite{bradtke1993reinforcement,fazel2018global,zhang2019policy}, the latter leads to a \emph{backward-in-time} optimal control problem, which{, in general,} has been recognized to be challenging to solve,  especially in a model-free fashion \cite{kiumarsi2014reinforcement,modares2014linear}. 
Most other RL algorithms for general MFGs are also restricted to the stationary mean-field setting \cite{subramanian2019reinforcement,guo2019learning}, which does not apply to the LQ-MFG problem here. 
Fortunately, by identifying  {a} structural property of  our policy iteration algorithm and employing an NCE-based equilibrium-computation approach, one can develop a computable algorithm that executes forward in time. 

\vspace{5pt}
\noindent\textbf{Contribution.}
Our contribution in this paper is three-fold: ({\bf 1}) We formally introduce the  formulation of discrete-time LQ-MFGs with discounted cost, complementing  the standard continuous-time formulation \cite{huang2003individual,huang2007large}, and the discrete-time   average-cost setting of \cite{moon2014discrete}, together with existence and uniqueness guarantees for the MFE. ({\bf 2}) By identifying  structural results of the NCE-based  policy iteration update, we develop an equilibrium-computation algorithm, with convergence error analysis, that can be implemented   \emph{forward-in-time}. ({\bf 3})  We illustrate the quality of the computed MFE in terms of the algorithm's stopping condition and the number of agents. Our structural results and equilibrium-computation algorithm lay  foundations for developing model-free RL algorithms, as our immediate  future work. 

\vspace{5pt}
\noindent\textbf{Outline.}
The remainder of the paper proceeds as follows. In Section II, we introduce the linear-quadratic mean-field game model. Section III provides a background of relevant results from the literature on mean-field games as well as establishes a characterization of the mean-field equilibrium for our setting. Section IV outlines some properties of the computational process and presents the algorithm. Numerical results are presented in Section V. Concluding remarks and some future directions are presented in Section VI. Proofs of all results have been relegated to the Appendix. 

\section{Linear Quadratic Mean-Field Game Model}
\label{sec:model}

Consider a dynamic game with $N<\infty$ agents playing on an infinite time horizon. For each agent $n \in [N]$, let $z_t^n \in \RR$ represent the current state and $u_t^n \in \RR$ represent the current control. Each agent $n$'s state is assumed to follow linear time-invariant (LTI) dynamics,
\begin{align}
z_{t+1}^n = az_t^n +bu_t^n + w_t^n\label{eq:finitesystem},
\end{align} 
with constants $a\in\RR$, $b\in\RR\setminus\{0\}$, independent and identically distributed initial state $z_0^n$ with mean $\nu_0$ and variance $\sigma_0^2$, and independent identically distributed noise terms, $w_t^n\sim\Ns(0,\sigma_w^2)$, assumed to be independent of $z_0^{n'}$, $w_s^{n'}$ for all $s$ and $t$, and for all $n'\neq n$.

At the beginning of each time step, each agent observes every other agent's state. Thus, assuming perfect recall, the information of agent $n$ at time $t$ is $i_t^n = \big((z_0^1,\ldots,z_0^N),u_0^n,\ldots,(z_{t-1}^1,\ldots,z_{t-1}^N),u_{t-1}^n,(z_t^1,\ldots,z_t^N)\big)$. A control policy for agent $n$ at time $t$, denoted by $\eta_t^n$, maps its current information $i_t^n$ to a control action $u_t^n\in\RR$. The joint control policy is the collection of policies across agents, and is denoted by $\eta_t = (\eta_t^1,\ldots,\eta_t^N)$. The joint control law is the collection of joint control policies across time, denoted by $\eta = (\eta_0,\eta_1,\ldots)$.

The agents are coupled via their expected cost functions. The expected cost for agent $n$ under joint policy $\eta$ and the initial state distribution, denoted by $J^n(\eta)$, is defined as, 
\begin{align} \label{eq:costfcni}
	\nonumber&J^n(\eta) :=\\
	&\hspace{0.5em}\sum_{t=0}^{T}\gamma^t \EE_{\eta}
	\Bigg[  c_z\big(z_t^n - \frac{1}{N-1}\sum_{n'\neq n}z_t^{n'}\big)^2+c_u(u_t^n)^2 \Bigg],
\end{align}
where $\gamma\in[0,1)$ is the discount factor and $c_z, c_u > 0$ are cost weights for the state and control, respectively. The expectation is taken with respect to the randomness of all agents' state trajectories induced by the joint control law $\eta$ and the initial state distribution.

In the finite-agent system described above, each agent is assumed to fully 
observe all other agents' states. As $N$ grows, determining a policy that is a 
best-response to all other agents' policies becomes computationally intractable, 
precluding computation of a Nash equilibrium \cite{cardaliaguet2018mean}. 
Fortunately, since the coupling between agents manifests itself as an average of all 
agent's states, one can approximate the finite agent game by an infinite 
population game in which a \emph{generic agent} interacts with the mass behavior of all agents. 
The empirical average of all agents' states becomes the mean state process 
(\emph{i.e.}, the \emph{mean-field}), decoupling the agents and yielding a 
stochastic control problem. The infinite population game is termed a \emph{mean-field game} \cite{huang2006large}. In this paper, we focus on \emph{linear-quadratic} MFGs  in which the generic agents' dynamics are linear and its costs 
are quadratic. 

The state process of the generic agent is identical to  \eqref{eq:finitesystem}, that is, 
\begin{align} \label{eq:dynamics}
z_{t+1} = az_t +bu_t + \omega_t,
\end{align}
where $z_0$ is distributed  with mean $\nu_0$ and variance $\sigma_0^2$, and $\omega_t$ is an i.i.d. noise process generated according to the distribution $\Ns(0,\sigma_w^2)$, assumed to be independent of the mean-field and the agent's state. 

The generic agent's control policy at time $t$, denoted by $\mu_t$, translates the available information at time $t$, denoted by $i_t = (z_0,u_0,\ldots,z_{t-1},u_{t-1},z_t)$, to a control action $u_t\in\RR$. The collection of control policies across time is referred to as a control law and is denoted by $\mu = (\mu_0,\mu_1,\ldots)\in\Ms$ where $\Ms$ is the space of admissible control laws. The generic agent's expected cost  under control law $\mu$ is defined as,
\begin{align} \label{eq:costfcn}
J(\mu,\bar z) = \sum_{t=0}^{\infty} \gamma^t \EE_{\mu}\left[ c_z( z_t-\bar{z}_t )^2 + c_u u_t^2\right],
\end{align}
where $\bar z_t = \EE[z_t]$ represents the mean-field at time $t$. The mean-field {trajectory} $\bar{z}:= (\bar{z}_0, \bar{z}_1,\ldots)$ is assumed to belong to the space of bounded sequences, that is, $\bar{z}\in\Zs$ where $\Zs:= 
\ell^\infty=\{ x = (x_0, x_1, \ldots) \mid \sup_{t\ge0}|x_t| < \infty \}$.

To define a mean-field equilibrium, first define the operator $\Lambda:\Ms\to \Zs$ as a mapping from the space of admissible control laws $\Ms$ to the space of mean-field trajectories $\Zs$. Due to the information structure of the problem, the policy at any time only depends upon the current state \cite{moon2014discrete}. It is defined as follows: given $\mu \in \Ms$, the mean-field $\bar{z} := \Lambda(\mu)$ is constructed recursively as
\begin{align}\label{eq:def_Lambda}
\bar{z}_{t+1} := A \bar{z}_{t} +B \mu_{t}(\bar{z}_{t}),\quad \bar z_0 = \nu_0.
\end{align}
Similarly, define an operator $\Phi : 
\Zs \to \Ms$ as a mapping from a mean-field trajectory to its 
optimal control law,
\begin{align}
\Phi(\bar{z}) := \argmin_{\mu} J(\mu, \bar{z}).
\end{align}
A mean-field equilibrium can now be defined.
\begin{definition} [\cite{saldi2018markov}] \label{def:mfe}
	The tuple $(\mu^*, \bar{z}^*) \in \Ms \times \Zs$ is an MFE if $\mu^*= 
	\Phi(\bar{z}^*)$ and $\bar z^*=\Lambda(\mu^*)$. %
\end{definition}

The power of mean-field analysis is the fact that the equilibrium policies obtained in the infinite-population game are good approximations to the 
equilibrium policies in the finite-population game 
\cite{huang2006large,huang2003individual,lasry2007mean}. The focus of the 
current paper is on approximate equilibrium computation and, while we do not 
derive explicit bounds for finite $N$, we offer empirical results in Section  
\ref{sec:Sims} illustrating the effectiveness of the mean-field approximation.

\section{Background: MFE Characterization} 
\label{sec:MFE}

This section establishes some properties of mean-field equilibria. The results are complementary to those of \cite{huang2003stochastic}, \cite{huang2006large}, and \cite{moon2014discrete}. Note that while \cite{moon2014discrete} constructs a discrete-time analogue of \cite{huang2006large}, the model of \cite{moon2014discrete} considers an average-cost criterion, whereas here we consider a discounted-cost criterion, as in \cite{saldi2018markov}.

Recall that in the limiting case, as $N\to\infty$, the problem becomes a 
constrained stochastic optimal control problem. In particular, as described by \eqref{eq:costfcn}, a generic agent aims to find a control law $\mu$ that 
tracks a given reference signal (the mean-field trajectory). This control law, 
hereafter referred to as the \emph{cost-minimizing control}, is characterized in closed-form by the following lemma.

\begin{lemma}
\label{lem:LQT}
Given a mean-field trajectory, $\bar z = (\bar z_0,\bar z_1,\ldots) \in \Zs$, 
the control law that minimizes \eqref{eq:costfcn}, termed the     
\emph{cost-minimizing control}, denoted by $\Phi(\bar z) = (\mu_0(z_0;\bar z),\mu_1(z_1;\bar z),\ldots)$, is given for each $t$ by,\footnote{The cost-minimizing control policy $\mu_t$ (from the cost-minimizing control $\mu$) is denoted by $\mu_t(\cdot;\bar{z})$ to illustrate that it is parameterized by the mean-field trajectory $\bar{z}$.}
\begin{align} \label{eq:u_t}
u_t = \mu_t( z_t;\bar z) := g_p(apz_t + \lambda_{t+1}(\bar z)),
\end{align}
where $g_p := -\gamma b /(c_u + \gamma b^2 p)$, $p$ is the unique positive solution to the discrete-time algebraic Riccati equation (DARE),
\begin{align} \label{eq:s}
p^2 + \big( [(1-\gamma a^2) c_u/(\gamma b^2)] -c_z\big)p - c_zc_u/(\gamma b^2) = 0,
\end{align}
that is
\begin{align}\label{eq:closedform}
p= ( -\alpha + \sqrt{\alpha^2 + 4 \beta}) / 2,
\end{align}
where $\alpha := {{c_u (1-\gamma a^2)}\over {\gamma b^2}} - c_z$, $\beta := {{c_z c_u}\over {\gamma b^2}}\,,$ and the sequence $\{\lambda_t\}$, referred to as the \emph{co-state}, is generated backward-in-time by,
\begin{align} \label{eq:lambda_t}
\lambda_t(\bar z) = \gamma h_p \lambda_{t+1}(\bar z) - c_z \bar{z}_t,
\end{align}
where $h_p := a (1 + bpg_p)$.
\end{lemma}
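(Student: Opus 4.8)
The plan is to solve this infinite-horizon linear-quadratic tracking problem by dynamic programming, positing a value function that is quadratic in the state with a \emph{stationary} quadratic coefficient (reflecting the time-invariant dynamics \eqref{eq:dynamics} and the constant weights $c_z,c_u$) but a \emph{time-varying} linear coefficient (absorbing the time-varying reference $\bar z_t$). Concretely, I would guess $V_t(z) = p z^2 + 2\lambda_t z + r_t$ and write the associated Bellman equation $V_t(z) = \min_{u}\{ c_z(z-\bar z_t)^2 + c_u u^2 + \gamma\,\EE_{\omega}[V_{t+1}(az+bu+\omega)]\}$. Because the noise enters additively and $V_{t+1}$ is quadratic, the expectation over $\omega$ contributes only $\gamma p\sigma_w^2$ to the constant term, so the stage objective stays quadratic in $u$; its first-order condition is linear in $u$ and yields $u = -\gamma b(c_u+\gamma b^2 p)^{-1}(apz + \lambda_{t+1}) = g_p(apz+\lambda_{t+1})$, which is exactly \eqref{eq:u_t} with the stated $g_p$.

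Substituting this minimizer back and matching the coefficients of $z^2$, $z$, and $1$ on the two sides of the Bellman equation would produce three recursions. Using that at the minimizer $az+bu = h_p z + b g_p \lambda_{t+1}$ with $h_p = a(1+bpg_p)$, matching $z^2$ gives the stationarity condition $p = c_z + \gamma a^2 c_u p/(c_u+\gamma b^2 p)$; clearing denominators reduces this precisely to the DARE \eqref{eq:s}. Since the constant term of \eqref{eq:s} is $-c_zc_u/(\gamma b^2)<0$, its two real roots have opposite sign, and the positive root \eqref{eq:closedform} is the one forced by requiring a convex, bounded-below value function, i.e. the stabilizing solution. Matching the coefficient of $z$ gives $\lambda_t = -c_z\bar z_t + (c_u ap g_p^2 + \gamma p h_p b g_p + \gamma h_p)\lambda_{t+1}$; the key simplification is that the two cross terms cancel, since with $g_p=-\gamma b/(c_u+\gamma b^2 p)$ and $h_p = ac_u/(c_u+\gamma b^2 p)$ one checks $c_u ap g_p^2 + \gamma p h_p b g_p = 0$, leaving exactly the co-state recursion \eqref{eq:lambda_t}. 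The constant term yields a scalar affine recursion in $r_t$ driven by $c_z\bar z_t^2$, $\gamma p\sigma_w^2$, and $\lambda_{t+1}^2$, which does not affect the control and need only be shown to be well-defined.

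Two well-posedness points hinge on closed-loop stability. First, the backward co-state recursion \eqref{eq:lambda_t} must admit a bounded solution over the infinite horizon; I would solve it explicitly as the geometric series $\lambda_t = -c_z\sum_{k\ge 0}(\gamma h_p)^k \bar z_{t+k}$, which converges because $\bar z\in\Zs=\ell^\infty$ and $|\gamma h_p|<1$. Second, the infinite-horizon value must be finite and the candidate control verified optimal. Both follow from the standard discounted-LQ fact that, after rescaling to the pair $(\sqrt\gamma\,a,\sqrt\gamma\,b)$, the positive DARE root is stabilizing, so the scaled closed loop satisfies $\gamma h_p^2<1$; controllability of $(\sqrt\gamma a,\sqrt\gamma b)$ is immediate from $b\neq 0$. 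Since $\gamma h_p^2<1$ implies $|\gamma h_p|=\gamma|h_p|<\sqrt\gamma<1$, the series above indeed converges. A verification argument then closes the proof: the candidate $V_t$ satisfies the Bellman equation, the state second moments under the candidate policy are bounded (bounded reference, stable closed loop, i.i.d. noise), and discounting forces the transversality condition $\gamma^t\EE[V_t(z_t)]\to 0$, so the control $\Phi(\bar z)$ given by \eqref{eq:u_t} is optimal.

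The main obstacle is not the algebra but isolating the correct \emph{stationary} structure: arguing that the quadratic coefficient is the constant, positive, stabilizing DARE root (rather than a time-varying or negative one) and that the induced closed loop is $\sqrt\gamma$-contractive. This single stability fact simultaneously guarantees convergence of the backward co-state series and finiteness of the discounted cost, and it is what makes the infinite-horizon tracking problem well-posed despite its backward-in-time co-state. The cancellation $c_u ap g_p^2 + \gamma p h_p b g_p = 0$ that produces the clean form \eqref{eq:lambda_t} is the only other nontrivial step and is a direct consequence of the definitions of $g_p$ and $h_p$.
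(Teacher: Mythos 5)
Your proposal is correct --- every equation you derive matches the paper's: the first-order condition yields $u = g_p(apz+\lambda_{t+1})$ as in \eqref{eq:u_t}, your $z^2$-matching condition $p = c_z + \gamma a^2 c_u p/(c_u+\gamma b^2 p)$ is algebraically identical to \eqref{eq:s} (the paper's steady-state Riccati equation $p = \gamma a^2 p + c_z - \gamma^2 a^2 b^2 p^2/(c_u+\gamma b^2 p)$ reduces to the same thing), and your cancellation $c_u a p g_p^2 + \gamma p h_p b g_p = 0$ indeed holds since $h_p = ac_u/(c_u+\gamma b^2 p)$, giving \eqref{eq:lambda_t}. The route differs in organization, though. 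The paper does not set up a Bellman equation at all: its proof of Lemma~\ref{lem:LQT} substitutes $K_t = \gamma^t p_t$, $g_t = \gamma^t \lambda_t$ into the \emph{finite-horizon, time-varying} LQT recursions of a cited technical report, then passes to the stationary limit of the Riccati recursion with a one-line assertion ("the Riccati equation will have a steady state solution"); the facts you prove inline --- existence and uniqueness of the positive root via the sign of the constant term, the stabilizing property $\sqrt{\gamma}\,|h_p| < 1$ via rescaling to a $\sqrt{\gamma}$-discount-absorbed pair, and the closed-form geometric-series co-state $\lambda_t = -c_z\sum_{s\ge 0}(\gamma h_p)^s \bar z_{t+s}$ --- are deliberately deferred by the paper to the proof of Lemma~\ref{lem:bdd_and_hurwitz} (which uses essentially your change of variables, with $a_c = \sqrt{\gamma}a$, $r_c = c_u/\gamma$). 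What your stationary-ansatz-plus-verification approach buys is self-containedness and explicitness about \emph{why} the positive root is the right one and why the infinite-horizon problem is well-posed, at the cost of length; what the paper's citation-based approach buys is brevity, at the cost of leaving the steady-state passage and the optimality verification implicit. The only soft spot in your sketch is the transversality step: to conclude optimality against an \emph{arbitrary} admissible competitor you need the standard restriction to finite-cost policies (so that $\gamma^t \EE[(z_t-\bar z_t)^2]$ is summable and the terminal term can be discarded along a subsequence), but this is routine and the paper does not address it at all, so it is not a gap relative to the paper's own standard of rigor.
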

To ensure the well-posedness of the cost-minimizing controller for mean-field $\bar{z} \in \Zs$, the optimal cost must be bounded \cite{moon2014discrete}. This is true given the following assumption.

\begin{assumption}  \label{asm:boundforcontract}
	Given $\gamma,a,b,c_z,c_u$ and $g_p$, $h_p$, where $p$ is the positive solution of \eqref{eq:s}, as given by \eqref{eq:closedform}, the quantity $T_p:=  |h_p| +  |c_zb g_p/(1 - \gamma h_p)|$ satisfies $T_p<1$. 
\end{assumption}
This assumption is analogous to condition (H6.1) of \cite{huang2003stochastic} for continuous-time settings. Lemma \ref{lem:bdd_and_hurwitz} shows that under Assumption \ref{asm:boundforcontract}, both the co-state process and  the optimal cost are bounded.

\begin{lemma} \label{lem:bdd_and_hurwitz}
\begin{enumerate}
\item If $\lambda_0(\bar z)  = -c_z \sum_{s=0}^{\infty} \left( \gamma h_p \right)^{s} \bar{z}_{s}$ then $\lambda=(\lambda_0,\lambda_1,\ldots) \in\ell^{\infty}$. Moreover, with this initial condition, 
\begin{align} \label{eq:costate}
\lambda_t(\bar z) = -c_z\sum_{s=0}^{\infty} \left( \gamma h_p \right)^{s}\bar{z}_{t+s},\,\text{for }t=0,1,\ldots.
\end{align}
\item Under Assumption \ref{asm:boundforcontract}, $J(\Phi(\bar z),\bar z)$ for 
any $\bar{z} \in \Zs$ is bounded. 
\end{enumerate}
\end{lemma}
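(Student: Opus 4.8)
The plan is to handle the two parts in turn, with Part 1 feeding directly into Part 2. For Part 1, I would first treat the co-state recursion \eqref{eq:lambda_t} as a \emph{forward} recursion by solving for $\lambda_{t+1}$, namely $\lambda_{t+1} = (\gamma h_p)^{-1}(\lambda_t + c_z \bar z_t)$, which shows that the entire sequence $(\lambda_0,\lambda_1,\ldots)$ is uniquely determined once $\lambda_0$ is fixed. The homogeneous mode of this recursion grows like $(\gamma h_p)^{-t}$, so a generic choice of $\lambda_0$ produces an unbounded sequence; the content of the claim is that the prescribed $\lambda_0 = -c_z\sum_{s=0}^\infty(\gamma h_p)^s\bar z_s$ is precisely the value that annihilates this growing mode. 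Rather than solve the recursion explicitly, I would define the candidate closed form $\lambda_t^\sharp := -c_z\sum_{s=0}^\infty(\gamma h_p)^s\bar z_{t+s}$ and verify by a shift of the summation index that it satisfies \eqref{eq:lambda_t}; since $\lambda_0^\sharp$ coincides with the prescribed $\lambda_0$, forward-uniqueness of the recursion forces $\lambda_t = \lambda_t^\sharp$ for all $t$, establishing \eqref{eq:costate}.

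The boundedness of $\lambda^\sharp$ (hence of the co-state) then follows from a geometric-series estimate: using $\bar z\in\Zs=\ell^\infty$, one gets $|\lambda_t^\sharp| \le c_z\|\bar z\|_\infty\sum_{s=0}^\infty|\gamma h_p|^s = c_z\|\bar z\|_\infty/(1-|\gamma h_p|)$, a bound uniform in $t$. This step --- and indeed the very convergence of the series defining $\lambda_0$ --- requires $|\gamma h_p|<1$, which I would extract from Assumption \ref{asm:boundforcontract}: it gives $|h_p|\le T_p<1$, and hence $|\gamma h_p|=\gamma|h_p|<1$ since $\gamma\in[0,1)$.

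For Part 2, I would substitute the cost-minimizing control $u_t = g_p(apz_t + \lambda_{t+1})$ from Lemma \ref{lem:LQT} into the dynamics \eqref{eq:dynamics}, which collapses the closed loop to $z_{t+1} = h_p z_t + bg_p\lambda_{t+1} + \omega_t$, recalling that $h_p = a(1+bpg_p)$. Because $|h_p|<1$ by Assumption \ref{asm:boundforcontract}, $\lambda$ is bounded by Part 1, and $\omega_t$ is zero-mean with finite variance $\sigma_w^2$ and independent of $z_t$, the induced first- and second-moment recursions for $\EE[z_t]$ and $\EE[z_t^2]$ are stable scalar recursions driven by uniformly bounded forcing; a standard argument then yields $\sup_t\EE[z_t^2]<\infty$. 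Consequently the per-stage cost $\EE[c_z(z_t-\bar z_t)^2 + c_u u_t^2]$ is bounded uniformly in $t$ by some constant $M<\infty$ (using $\bar z\in\ell^\infty$, the moment bounds on $z_t$, and $u_t = g_p(apz_t+\lambda_{t+1})$), so that $J(\Phi(\bar z),\bar z)\le M\sum_{t=0}^\infty\gamma^t = M/(1-\gamma)<\infty$.

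The main obstacle is the uniform second-moment bound in Part 2: one must verify that the time-varying forcing term $bg_p\lambda_{t+1}$ stays bounded (which is exactly the payoff of Part 1) and that stability $|h_p|<1$ prevents $\EE[z_t^2]$ from accumulating across the infinite horizon. Everything else --- the index shift verifying \eqref{eq:costate} and the geometric estimates --- is routine once $|\gamma h_p|<1$ and $|h_p|<1$ have been read off from Assumption \ref{asm:boundforcontract}.
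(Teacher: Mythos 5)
There is one genuine gap, and it is in where you source the key inequality $\gamma\lvert h_p\rvert<1$. Part 1 of Lemma \ref{lem:bdd_and_hurwitz} is stated \emph{without} Assumption \ref{asm:boundforcontract} — only Part 2 invokes it — and the inequality is needed even for the lemma's hypothesis to make sense, since the series defining $\lambda_0$ must converge. Your derivation ($T_p<1$ implies $\lvert h_p\rvert\le T_p<1$, hence $\gamma\lvert h_p\rvert<1$) is internally valid — both summands in $T_p$ are nonnegative, so $\lvert h_p\rvert\le T_p$ indeed holds — but it proves only a conditional version of Part 1. The paper obtains the inequality unconditionally from standard DARE theory: under the change of variables $a_c=\sqrt{\gamma}a$, $b_c=b$, $s_c=c_z$, $r_c=c_u/\gamma$, equation \eqref{eq:s} becomes an undiscounted DARE, which for $b\neq 0$ and $c_z>0$ admits a unique positive solution $p$ (this also justifies \eqref{eq:closedform}) whose closed-loop gain satisfies $\lvert a_c+b_cl_c\rvert=\sqrt{\gamma}\lvert h_p\rvert<1$, whence $\gamma\lvert h_p\rvert<\sqrt{\gamma}\lvert h_p\rvert<1$ with no hypothesis beyond the model's standing conditions. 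Nothing downstream in the paper actually breaks under your version, since every later use of the lemma (Lemma \ref{lem:contract}, Theorem \ref{thm:approx_mfe_bound}) operates under Assumption \ref{asm:boundforcontract}; but as a proof of the lemma \emph{as stated}, Part 1 requires the unconditional argument, and this DARE step is the one idea your proposal is missing.

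Everything else is correct, and your Part 2 is in fact cleaner than the paper's. The paper unrolls the closed loop $z_{t+1}=h_pz_t+bg_p\lambda_{t+1}+\omega_t$ explicitly in terms of $z_0$ and bounds $\EE[(z_t-\bar z_t)^2]$ and $\EE[u_t^2]$ by expressions that grow polynomially in $t$ (e.g., replacing partial geometric sums by $t$), then absorbs this growth with the discount via identities such as $\sum_{t\ge0}t\gamma^t=\gamma/(1-\gamma)^2$; your stable first- and second-moment recursions with bounded forcing instead give the uniform bound $\sup_t\EE[z_t^2]<\infty$ and hence $J\le M/(1-\gamma)$ directly — a tighter and more standard route, with $\lvert h_p\rvert<1$ legitimately supplied by Assumption \ref{asm:boundforcontract} there. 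Your Part 1 mechanism (verify the candidate closed form \eqref{eq:costate} by an index shift, then invoke uniqueness of the solution of \eqref{eq:lambda_t} given $\lambda_0$) is the paper's backward-recursion argument in different clothing; just handle the degenerate case $\gamma h_p=0$ separately, since your forward solve divides by $\gamma h_p$ — there \eqref{eq:lambda_t} reads $\lambda_t=-c_z\bar z_t$ and the claim is immediate.
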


Substituting the cost-minimizing control, \eqref{eq:u_t}, into the state equation, \eqref{eq:dynamics}, the closed-loop dynamics are
\begin{align*}
z_{t+1} &= a z_t + b g_p (ap z_t + \lambda_{t+1}(\bar z)) + \omega_t\\
&= h_p z_t + b g_p \lambda_{t+1}(\bar z) + \omega_t.
\end{align*}
Taking expectation, the above equation becomes $\bar{z}'_{t+1} = h_p \bar{z}_t + b g_p \lambda_{t+1}(\bar z)$
for $t=0,1,\ldots$, where $\bar{z}'_0 = \nu_0$. Substitution of the co-state process, \eqref{eq:costate},
yields the following as the mean-field dynamics,
\begin{align} \label{eq:MF-update}
\bar{z}'_{t+1} = h_p \bar{z}_t - c_zb g_p \sum_{s=0}^{\infty} \left( \gamma h_p \right)^s \bar{z}_{t+1+s}.
\end{align}
In the same vein as \cite{huang2006large}, the above can be compactly summarized as an update rule, termed the \emph{mean-field update operator}, on the space of (bounded) mean-field trajectories. The update rule, denoted by $\Ts:\Zs \rightarrow \Zs$, is given by,
\begin{align}
\bar{z}' = \Ts(\bar{z}) := \Lambda(\Phi(\bar{z})).\label{eq:Ts}
\end{align}
The operator outputs an updated mean-field trajectory $\bar{z}'$, using \eqref{eq:def_Lambda}, resulting from the cost-minimizing control for a  
mean-field trajectory $\bar z$, given by \eqref{eq:u_t}. The operator is a 
contraction mapping, as shown below.
\begin{lemma}\label{lem:contract}
Under Assumption 1, the mean-field update operator $\Ts$ is a contraction mapping on $\Zs=\ell^\infty$.
\end{lemma}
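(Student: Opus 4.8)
The plan is to show that $\Ts$ is a contraction on $\ell^\infty$ by bounding the sup-norm of the difference $\Ts(\bar z) - \Ts(\bar y)$ in terms of the sup-norm of $\bar z - \bar y$, with a Lipschitz constant strictly less than one. The natural starting point is the explicit recursion \eqref{eq:MF-update} derived just above the statement. Writing $\bar z' = \Ts(\bar z)$ and $\bar y' = \Ts(\bar y)$, and letting $\delta_t := \bar z_t - \bar y_t$ and $\delta'_t := \bar z'_t - \bar y'_t$, linearity of the update immediately gives
\begin{align*}
\delta'_{t+1} = h_p \delta_t - c_z b g_p \sum_{s=0}^{\infty} (\gamma h_p)^s \delta_{t+1+s}.
\end{align*}
First I would take absolute values and apply the triangle inequality, bounding each $|\delta_{t+k}|$ by $\|\delta\|_\infty = \|\bar z - \bar y\|_\infty$. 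The first term contributes $|h_p|\,\|\delta\|_\infty$. For the infinite sum, pulling out the constant and summing the geometric series $\sum_{s=0}^\infty |\gamma h_p|^s = 1/(1 - |\gamma h_p|)$ gives a contribution of $|c_z b g_p|/(1 - |\gamma h_p|) \cdot \|\delta\|_\infty$, provided $|\gamma h_p| < 1$ so that the series converges.

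Combining these and taking the supremum over $t$ yields
\begin{align*}
\|\Ts(\bar z) - \Ts(\bar y)\|_\infty \le \Bigl( |h_p| + \frac{|c_z b g_p|}{1 - |\gamma h_p|} \Bigr) \|\bar z - \bar y\|_\infty,
\end{align*}
where I recognize the bracketed quantity as precisely $T_p$ from Assumption \ref{asm:boundforcontract} — modulo checking that $|1 - \gamma h_p|$ can be replaced by $1 - |\gamma h_p|$, or that the stated form of $T_p$ with $(1-\gamma h_p)$ in the denominator is the intended bound. Since Assumption \ref{asm:boundforcontract} asserts $T_p < 1$, this establishes the contraction directly, and the contraction modulus is $T_p$.

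The main obstacle I anticipate is twofold. First, one must verify that the geometric series actually converges, i.e. that Assumption \ref{asm:boundforcontract} implicitly forces $|\gamma h_p| < 1$; this likely follows because $T_p < 1$ requires $|h_p| < 1$, and since $\gamma \in [0,1)$ we get $|\gamma h_p| \le |h_p| < 1$. Second, there is a subtle point about reconciling the denominator $1 - |\gamma h_p|$ (which arises naturally from the triangle-inequality bound on $\sum |\gamma h_p|^s$) with the denominator $1 - \gamma h_p$ appearing in the definition of $T_p$; I would need to argue either that $h_p > 0$ (so the two agree) or that the absolute-value form is an upper bound for the expression defining $T_p$, ensuring the stated $T_p < 1$ genuinely controls the derived Lipschitz constant. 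Once these sign and convergence details are settled, the estimate is a routine geometric-series computation and the Banach-space structure of $\ell^\infty$ guarantees a unique fixed point, which is exactly the mean-field component of the MFE in Definition \ref{def:mfe}.
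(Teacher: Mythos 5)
Your proposal is correct and follows essentially the same route as the paper's own proof: the same difference sequences $\delta_t = \bar z_t - \hat z_t$, the same linear recursion $\delta'_{t+1} = h_p\delta_t - c_z b g_p \sum_{s=0}^{\infty}(\gamma h_p)^s \delta_{t+1+s}$ obtained from \eqref{eq:MF-update}, and the same geometric-series bound giving modulus $T_p$; your way of securing convergence of the series (noting $T_p<1$ forces $|h_p|<1$, hence $\gamma|h_p|<1$) is a harmless variant of the paper's, which instead invokes the unconditional DARE-based fact $\gamma|h_p|<\sqrt{\gamma}|h_p|<1$ from Lemma \ref{lem:bdd_and_hurwitz}. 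The sign subtlety you flag is genuine, and in fact the paper's proof glosses over it as well: it pulls the absolute value inside the series, using $\bigl\lvert c_z b g_p \sum_{s}(\gamma h_p)^s\bigr\rvert = \lvert c_z b g_p/(1-\gamma h_p)\rvert$, which matches the honest triangle-inequality constant $\lvert c_z b g_p\rvert/(1-\lvert\gamma h_p\rvert)$ only when $h_p \ge 0$ (true exactly when $a>0$, since $h_p = a\,c_u/(c_u+\gamma b^2 p)$), so for $h_p<0$ both arguments would need Assumption \ref{asm:boundforcontract} strengthened to the absolute-value form of $T_p$.
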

Furthermore, iterated application of $\Ts$ results in a fixed point which corresponds to an MFE, as expressed below.
\begin{theorem}\label{thm:MFE_T_equivalence}
A mean-field trajectory $\bar{z}^*$ is a fixed point of $\Ts$,
\begin{align}
\bar{z}^* = \mathcal{T}(\bar{z}^*),
\end{align}
if and only if $(\Phi(\bar{z}^*), \bar{z}^*)$ is an MFE. 
\end{theorem}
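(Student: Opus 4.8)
The statement is essentially a consequence of unwinding the definitions of $\Ts$, $\Phi$, $\Lambda$, and the MFE in Definition \ref{def:mfe}, so the plan is to establish the two implications directly. The one substantive preliminary is to confirm that the objects involved are well defined: Lemma \ref{lem:LQT} guarantees that $\Phi(\bar z)$ is a uniquely determined admissible control law for every $\bar z \in \Zs$, Lemma \ref{lem:bdd_and_hurwitz} (under Assumption \ref{asm:boundforcontract}) ensures the associated optimal cost is bounded, and Lemma \ref{lem:contract} confirms that $\Ts = \Lambda \circ \Phi$ maps $\Zs$ into itself. With these in hand, $\Ts(\bar z) = \Lambda(\Phi(\bar z))$ is a legitimate composition on $\Zs$, and the tuple $(\Phi(\bar z^*), \bar z^*)$ is unambiguous.

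For the forward direction, I would assume $\bar z^* = \Ts(\bar z^*)$ and set $\mu^* := \Phi(\bar z^*)$. By construction the first MFE condition, $\mu^* = \Phi(\bar z^*)$, holds automatically. For the second, I would substitute the definition of $\Ts$ from \eqref{eq:Ts}: $\bar z^* = \Ts(\bar z^*) = \Lambda(\Phi(\bar z^*)) = \Lambda(\mu^*)$, which is exactly $\bar z^* = \Lambda(\mu^*)$. Hence $(\mu^*, \bar z^*) = (\Phi(\bar z^*), \bar z^*)$ satisfies both conditions of Definition \ref{def:mfe} and is an MFE.

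For the converse, I would assume $(\Phi(\bar z^*), \bar z^*)$ is an MFE and write $\mu^* := \Phi(\bar z^*)$. The second MFE condition then reads $\bar z^* = \Lambda(\mu^*)$; substituting $\mu^* = \Phi(\bar z^*)$ gives $\bar z^* = \Lambda(\Phi(\bar z^*)) = \Ts(\bar z^*)$, so $\bar z^*$ is a fixed point of $\Ts$.

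I do not anticipate a genuine obstacle here, since both characterizations collapse to the single identity $\bar z^* = \Lambda(\Phi(\bar z^*))$; the only point requiring care is the well-definedness noted above, namely that $\Phi$ returns a \emph{unique} minimizer (so the first MFE condition is free once we fix the first component of the tuple as $\Phi(\bar z^*)$) and that the composition $\Lambda \circ \Phi$ remains within $\Zs$, both of which are supplied by the preceding lemmas. The essential content of the theorem is therefore the observation that the MFE consistency condition and the fixed-point condition are two names for the same equation.
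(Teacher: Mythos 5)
Your proof is correct and follows essentially the same route as the paper's own: both directions reduce to substituting $\Ts = \Lambda \circ \Phi$ into the two conditions of Definition~\ref{def:mfe}, with the identity $\bar z^* = \Lambda(\Phi(\bar z^*))$ doing all the work. Your added remarks on well-definedness (uniqueness of the minimizer from Lemma~\ref{lem:LQT} and the composition staying in $\Zs$) are a reasonable bit of extra care that the paper leaves implicit, but they do not change the substance of the argument.
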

As a corollary to the above results, there exists a unique MFE,  by the Banach fixed-point theorem \cite{luenberger1997optimization}. 
Moreover, a straightforward approach for computing the equilibrium, \emph{i.e.}, the fixed-point of $\Ts$, is to iterate the operator $\Ts$ until convergence. Indeed, we note that this process is referred to as \emph{policy iteration} in the continuous-time LQ-MFGs setting of \cite{huang2007large}. 
However, the cost-minimizing control given by Lemma \ref{lem:LQT} needs to be calculated backward-in-time, which makes the update of $\Ts$ in \eqref{eq:Ts} not computable. In fact, to develop model-free learning algorithms, forward-in-time computation is necessary.

In what follows, we investigate properties of the mean-field operator that permit the construction of a \emph{computable policy iteration} algorithm that proceeds forward-in-time.

\section{Approximate Computation of the MFE}
\label{sec:PI}

\subsection{Properties of the Mean-Field Update Operator}
A prerequisite for the development of any algorithm is that the representations of all quantities in the algorithm are finite. Satisfying this requirement in our case is complicated by the fact that both the equilibrium mean-field trajectory and the cost-minimizing control are infinite dimensional (see Def. \ref{def:mfe}). To address the challenge, we represent the infinite sequences by finite sets of parameters. 

The parameterization of the mean-field trajectory is inspired by a property of the update operator. To show this property, consider the following class of sequences.
\begin{definition} \label{def:tau_latent_seq}
A sequence $x=(x_0,x_1,\ldots)$ is said to be a \emph{$\tau$-latent LTI  sequence} if $x_{t+1}=rx_t$ for some $r\in\RR$ for all $t=\tau,\tau+1,\ldots$.
\end{definition}
Any $\tau$-latent LTI sequence, for $\tau<\infty$, can be represented by $\tau+2$ parameters, summarized by the pair $(x_{0:\tau},r)$, where $x_{0:\tau} = (x_0,\ldots,x_{\tau})$. This is illustrated in the following example. 
\begin{example} Consider the following sequence $(x_0,x_1,\ldots)$ where $\phi_0,\phi_1$ are arbitrary functions and $s_0,r\in\RR$,
\begin{align*}
(x_0,x_1,x_2,x_3,x_4,\ldots) &= (s_0,\phi_0(x_0),\phi_1(x_1),rx_2,rx_3,\ldots)\\
&=: (x_{0:2},r).
\end{align*}
The sequence obeys linear dynamics starting at $t=2$. As such, the above sequence is referred to as a $2$-latent LTI sequence and is denoted by $(x_{0:2},r)$. 
\end{example}

Our algorithm is based on the observation that, given any stable\footnote{{Namely, $|r|\leq 1$.}} $\tau$-latent LTI sequence with constant $r$, the mean-field update operator outputs a stable $(\tau+1)$-latent LTI sequence with the same constant $r$, as summarized by Lemma \ref{lem:LTI2LTI} below.
\begin{lemma} \label{lem:LTI2LTI}
	If $(x_{0:\tau},r)$ is a $\tau$-latent LTI sequence with constant $r$ satisfying $|r| \leq 1$, then $(x'_{0:\tau+1},r)$, where $x' = \Ts(x)$, is a $(\tau+1)$-latent LTI sequence with constant {$r$}.
\end{lemma}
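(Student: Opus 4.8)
The plan is to compute $x' = \Ts(x)$ directly from the explicit mean-field update formula \eqref{eq:MF-update} and to exploit the fact that, for indices beyond $\tau$, the input sequence is purely geometric. Written out, the update reads
\begin{align*}
x'_{t+1} = h_p\, x_t - c_z b g_p \sum_{s=0}^{\infty} (\gamma h_p)^s\, x_{t+1+s}, \qquad t \geq 0,
\end{align*}
with $x'_0 = \nu_0$. The whole argument hinges on evaluating the infinite sum once $t$ is large enough that every index appearing in it lies in the latent (linear) regime.

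First I would fix $t \geq \tau$. Since $x$ is $\tau$-latent with constant $r$, the recurrence $x_{k+1} = r x_k$ holds for all $k \geq \tau$; iterating gives $x_{t+1+s} = r^{1+s} x_t$ for every $s \geq 0$. Substituting this into the sum, I obtain
\begin{align*}
x'_{t+1} = \Big( h_p - c_z b g_p\, r \sum_{s=0}^{\infty} (\gamma h_p r)^s \Big) x_t .
\end{align*}
Here I must check convergence of the geometric series: Assumption \ref{asm:boundforcontract} forces $|h_p| < 1$ (since $T_p = |h_p| + |c_z b g_p/(1-\gamma h_p)| < 1$), so with $\gamma \in [0,1)$ and $|r| \leq 1$ we have $|\gamma h_p r| \leq \gamma |h_p| < 1$. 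Hence the series sums to $(1 - \gamma h_p r)^{-1}$ and
\begin{align*}
x'_{t+1} = K_r\, x_t, \qquad K_r := h_p - \frac{c_z b g_p\, r}{1 - \gamma h_p r},
\end{align*}
where $K_r$ is a constant depending only on $r$ and the fixed problem data, and crucially \emph{independent of $t$}.

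With this relation in hand the conclusion is immediate. For any $t \geq \tau+1$ we have both $t \geq \tau$ and $t-1 \geq \tau$, so the displayed identity applies at indices $t$ and $t-1$, giving $x'_{t+1} = K_r x_t$ and $x'_t = K_r x_{t-1}$. Using $x_t = r x_{t-1}$ (valid because $t - 1 \geq \tau$) yields
\begin{align*}
x'_{t+1} = K_r x_t = K_r\, r\, x_{t-1} = r\, (K_r x_{t-1}) = r\, x'_t
\end{align*}
for every $t \geq \tau+1$. By Definition \ref{def:tau_latent_seq} this is exactly the statement that $x'$ is a $(\tau+1)$-latent LTI sequence with the same constant $r$; boundedness (so that $x' \in \Zs$) is automatic since $\Ts$ maps $\Zs$ into itself, and the latent tail is bounded because $|r| \le 1$.

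I expect the main obstacle to be the index bookkeeping at the boundary, together with the convergence check. The shift from $\tau$ to $\tau+1$ is not cosmetic: the relation $x'_{t+1} = K_r x_t$ only produces a linear recurrence \emph{in $x'$} once it can be invoked at two consecutive indices, which first happens at $t = \tau+1$, and expressing $x'_t$ as $K_r x_{t-1}$ requires $t-1$ to already be in the latent regime. One also has to confirm that the input sequence genuinely lies in $\Zs$ so that \eqref{eq:MF-update} is well defined; this follows from $|r|\le 1$.
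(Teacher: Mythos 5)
Your proof is correct and follows essentially the same route as the paper's: substitute the geometric tail $x_{t+1+s}=r^{1+s}x_t$ into \eqref{eq:MF-update} to get $x'_{t+1}=\hat r_p x_t$ with $\hat r_p = h_p - c_z b g_p r/(1-\gamma h_p r)$ (your $K_r$), then combine this relation at two consecutive indices to conclude $x'_{t+1}=r x'_t$ from index $\tau+1$ onward. Your added convergence check ($|h_p|<1$ from Assumption \ref{asm:boundforcontract}, hence $|\gamma h_p r|<1$) is a point the paper leaves implicit, but it does not change the argument.
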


{By} Lemma \ref{lem:LTI2LTI}, {each application of operator $\Ts$ increases} the dimension of the mean-field trajectory's parameterization. This allows us to construct an iterative algorithm in which, for any finite iteration, all quantities are computable.

\subsection{A {Computable} Policy Iteration Algorithm}

This section presents a policy iteration algorithm for approximately computing the mean-field equilibrium. The algorithm operates over iterations $k=1,2,\ldots$, where variables at the $k^{\text{th}}$ iteration are denoted by superscript $(k)$. 

As mentioned {in the discussion following} Theorem \ref{thm:MFE_T_equivalence}, iterating the mean-field update operator $\Ts$ yields a process that {converges} to the MFE, though not computable due to the backward-in-time calculation of the cost-minimizing control. To address this issue, we propose an iterative algorithm that operates on parameterized sequences. Motivated by the result of Lemma \ref{lem:LTI2LTI}, by initializing the algorithm with a $\tau$-latent sequence, we can ensure that, after any finite number of iterations, the computed sequence is also $\tau$-latent. {Importantly, this structure allows one to describe the mean-field trajectory at any iteration by a finite set of parameters}. {Furthermore, } the $\tau$-latent LTI structure {allows for} the cost-minimizing control {to} be calculated {forward-in-time}. As a consequence, the aforementioned procedure can {be carried out} in a computable way, provided that the iteration number $k$ remains finite. 

More formally, our {(computable)} policy iteration algorithm proceeds as follows. Without loss of generality, we start with a $0$-latent LTI mean-field trajectory $\bar z^{(0)}$ with $\bar z^{(0)}_0=\nu_0$ at iteration $0$. Thus, at any iteration $k$, by Lemma \ref{lem:LTI2LTI}, the mean-field trajectory $\bar z^{(k)}$ is a $k$-latent LTI sequence. Hence, the cost-minimizing control under $\bar z^{(k)}$ can be written in parameterized form\footnote{With some abuse of notation, we have replaced the (infinite) mean-field trajectory with its parameterized form.} as:
\begin{align}
\hspace{-0.5em}u_t^{(k)} = \mu_t(z_t;(\bar z^{(k)}_{0:k},r)) := g_p\big(apz_t - c_zl_{k}(t,\bar z^{(k)},r)\big)\label{equ:LTI_u}
\end{align}
where
\begin{align*}
l_{k}(t,\bar z,r) :=\left\{
\begin{array}{ll}
\frac{(\gamma h_p)^{k-t} r \bar z_{k}}{1-\gamma h_pr} + q_{k}(t,\bar z) & \text{if }t<k \\
\frac{r^{t-k+1}\bar z_{k}}{1-\gamma h_pr^{(k)}} & \text{if }t\ge k
\end{array}
\right.
\end{align*}
and 
$q_{k}(t,\bar z) := \sum_{s=0}^{k-t-1}(\gamma h_p)^s\bar z_{t+1+s}$.

Note that the {control} expressed in \eqref{equ:LTI_u} has a closed-form (without infinite sums) and is indeed calculated forward-in-time. The mean-field trajectory is then updated by the operator $\Ts$, which first execut{es} the {control} in \eqref{equ:LTI_u}, then aggregates  the generated mean-field trajectory by averaging the states over all agents,
\begin{align} \label{eq:mean-field_upd_lat}
\bar{z}^{(k+1)}_{t+1} = a \bar{z}^{(k)}_t + b u^{(k)}_t,\quad 
\end{align}
where $\bar{z}^{(k+1)}_{0}=\nu_0$, $0\le t\le k$. 
This closes the loop and leads to a computable version of iterating the operator $\Ts$. The details of the algorithm are summarized in Algorithm \ref{alg:PI_algo}.

\begin{algorithm} [h]
	\caption{Policy iteration for LQ-MFGs} \label{alg:PI_algo}
		\KwData{$a,b,c_z,c_u,\gamma,\nu_0$, $|r|\le1$, and $\varepsilon_s>0$}
		{\bf Initialize}: Set $\bar{z}^0$ as a $0$-latent LTI mean-field with $\bar{z}^{(0)}_0 = \nu_0$, $k=0$\;
		$p\leftarrow ( -\alpha + \sqrt{\alpha^2 + 4 \beta}) / 2$, where $\alpha = {{c_u (1-\gamma a^2)}\over {\gamma b^2}} - c_z$\, and $\beta = {{c_z c_u}\over {\gamma b^2}}$\\
		\Indp
		\Indm
		$g \leftarrow -\gamma b/(c_u+\gamma b^2p)$\\
		$h \leftarrow a(1+bpg)$\\
		$T \leftarrow |h|+|bgc_z/(1-\gamma h)|$\\
		\While{ $\max_{0 \leq t \leq k} \left\lvert \bar{z}^{(k)}_t - \bar{z}^{(k-1)}_t \right\rvert > \varepsilon_s(1-T)/T$}{		
		$\bar{z}^{(k+1)}_0 \leftarrow \nu_0$\\
		\For{$ m \in \{ 0,1,\ldots, k\}$}{
			$\bar{z}^{(k+1)}_{m+1} \leftarrow a \bar{z}^{(k)}_m + b \mu_m(\bar{z}^{(k)}_m;(\bar z^{(k)}_{0:k},r)) $
			}
		$k \leftarrow k + 1$}
		\Return  Parameter tuple  $(\bar z^{(k)}_{0:k},r)$ that yields the control  $\mu(\cdot;(\bar z^{(k)}_{0:k},r))$ (see \eqref{equ:LTI_u})
\end{algorithm}

Algorithm \ref{alg:PI_algo} generates iterates $\bar z^{(k)}$ that approach the equilibrium mean-field trajectory $\bar{z}^*$. Furthermore, the minimum number of iterations required to reach a given accuracy can be represented in terms of the desired accuracy, the initial approximation error, the contraction coefficient, and the constant of the linear dynamics. The convergence is summarized by the following theorem. 

\begin{theorem} \label{thm:approx_mfe_bound}
Under Assumption \ref{asm:boundforcontract}, given $\varepsilon_s>0$ there exists a $k^*> K(\varepsilon_s) := \left\lceil \left( \log \varepsilon_s - \log\left\lVert\bar{z}^{(0)} - \bar{z}^{*} \right\rVert_{\infty} \right) /\log T_p \right\rceil$ such that $\left\lVert\bar{z}^{(k^*)} - \bar{z}^{*} \right\rVert_{\infty} < \varepsilon_s$, where $T_p$ was introduced in Assumption \ref{asm:boundforcontract}.
\end{theorem}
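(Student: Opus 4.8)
The plan is to recognize Theorem \ref{thm:approx_mfe_bound} as the standard geometric convergence estimate for Banach fixed-point iteration, specialized to the contraction $\Ts$ with modulus $T_p$. The structural facts I need are already available: Lemma \ref{lem:contract} (that $\Ts$ is a contraction on $\ell^\infty$), Theorem \ref{thm:MFE_T_equivalence} (that the unique fixed point $\bar{z}^*$ of $\Ts$ is the MFE), and Lemma \ref{lem:LTI2LTI} (that the parameterized, computable iterates produced by Algorithm \ref{alg:PI_algo} are \emph{exactly} the sequences $\Ts^{k}(\bar{z}^{(0)})$, so that any bound proved for iterates of $\Ts$ transfers verbatim to the algorithm's output).

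First I would pin down the contraction modulus. From the proof of Lemma \ref{lem:contract}, the Lipschitz constant of $\Ts$ on $\ell^\infty$ is precisely $T_p = |h_p| + |c_z b g_p/(1-\gamma h_p)|$, which Assumption \ref{asm:boundforcontract} guarantees is strictly less than one. I would then argue by induction that the algorithm's iterate at step $k$ equals $\Ts^{k}(\bar{z}^{(0)})$: the base case is the initialization, and the inductive step is exactly the content of update \eqref{eq:mean-field_upd_lat}, whose equivalence to applying $\Ts$ to a $k$-latent sequence is certified by Lemma \ref{lem:LTI2LTI}.

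Second I would chain the contraction inequality against the fixed point. Writing $\bar{z}^{(k)} = \Ts(\bar{z}^{(k-1)})$ and $\bar{z}^* = \Ts(\bar{z}^*)$ and applying the contraction $k$ times yields the a priori geometric bound
\[
\big\lVert \bar{z}^{(k)} - \bar{z}^* \big\rVert_\infty \le T_p^{\,k}\,\big\lVert \bar{z}^{(0)} - \bar{z}^* \big\rVert_\infty .
\]
It then remains to find $k$ forcing the right-hand side below $\varepsilon_s$. Taking logarithms of $T_p^{\,k}\lVert \bar{z}^{(0)} - \bar{z}^*\rVert_\infty < \varepsilon_s$ and dividing by $\log T_p$ --- which is negative since $T_p<1$, so the inequality reverses --- gives exactly $k > (\log\varepsilon_s - \log\lVert \bar{z}^{(0)} - \bar{z}^*\rVert_\infty)/\log T_p$. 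Any integer $k^*$ exceeding the ceiling $K(\varepsilon_s)$ of this threshold therefore satisfies $\lVert \bar{z}^{(k^*)} - \bar{z}^*\rVert_\infty < \varepsilon_s$, which is the claim.

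I do not anticipate a genuine obstacle in the core estimate; the delicate points are bookkeeping. The first is confirming that the Lipschitz constant extracted from Lemma \ref{lem:contract} is exactly $T_p$ rather than a looser bound, since the entire form of $K(\varepsilon_s)$ is pinned to that constant. The second, and the one I regard as essential, is the identification of the computable iterates with $\Ts^{k}(\bar{z}^{(0)})$: I must verify the parameterized update introduces no truncation, which is precisely why the \emph{exact} (not approximate) preservation of latent-LTI structure in Lemma \ref{lem:LTI2LTI} is needed. Finally, to tie the bound to the algorithm's actual termination, I would optionally add the a posteriori estimate $\lVert \bar{z}^{(k)} - \bar{z}^*\rVert_\infty \le \frac{T_p}{1-T_p}\lVert \bar{z}^{(k)} - \bar{z}^{(k-1)}\rVert_\infty$ and observe that, because the difference of a $k$-latent and a $(k-1)$-latent sequence evolves as $r^{t-k}$ with $|r|\le 1$ for $t\ge k$, the finite maximum $\max_{0\le t\le k}|\bar{z}^{(k)}_t - \bar{z}^{(k-1)}_t|$ in the stopping test equals the full $\ell^\infty$ norm; the threshold $\varepsilon_s(1-T_p)/T_p$ is then exactly calibrated to deliver accuracy $\varepsilon_s$.
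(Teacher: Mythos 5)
Your proposal is correct and follows essentially the same route as the paper's own proof: the a priori geometric bound $\lVert \bar{z}^{(k)} - \bar{z}^*\rVert_\infty \le T_p^{\,k}\lVert \bar{z}^{(0)} - \bar{z}^*\rVert_\infty$ from the contraction property of Lemma \ref{lem:contract}, followed by taking logarithms (with the inequality flip from $\log T_p < 0$) to obtain the threshold $K(\varepsilon_s)$. The a posteriori stopping-condition analysis you label optional --- including the observation that the finite maximum $\max_{0\le t\le k}\lvert \bar{z}^{(k)}_t - \bar{z}^{(k-1)}_t\rvert$ equals the full $\ell^\infty$ norm because the tails decay as $r^{t-k}$ with $\lvert r\rvert \le 1$ --- is exactly the paper's Lemma \ref{lem:inf_norm_diff} and the triangle-inequality chain it uses to calibrate the threshold $\varepsilon_s(1-T_p)/T_p$, so your sketch covers the paper's argument in full.
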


\section{Numerical Results}
\label{sec:Sims}

In this section we present simulations to demonstrate the performance of  Algorithm \ref{alg:PI_algo} that approximates the equilibrium mean-field of the LQ-MFG. We use a normal distribution with mean and variance $\nu_0 = 20.0$ and ${\sigma_0^2} = 1.0$, respectively, to generate the initial condition of the generic agent $z_0$. The dynamics of the generic agent are defined as in \eqref{eq:dynamics} and the parameters are $ a = 1.1315$ and $b = 0.7752$. The standard deviation of the noise process is $\sigma_{\omega} = 0.03$. The cost function has the form shown in \eqref{eq:costfcn} with values $c_z = 0.0392$ and $c_u = 1.6864$. The positive solution {of the resulting} Riccati equation, given by \eqref{eq:closedform}. is $p = 0.8787$ {with} $g_p=-0.3227, h_p = 0.8828$ and $T_p = 0.9305<1$. The algorithm starts with initial mean field $\bar{z}^{(0)}$, which is a $0$-latent LTI mean-field with parameters $\bar{z}_0 = \nu_0, \hspace{0.2cm} r = 0.6$.

Figure \ref{fig:approx_mf} shows approximations of the mean-field for different values of $\varepsilon_s$. As shown, for decreasing values of $\varepsilon_s$ the approximations approach the equilibrium mean-field. Interestingly, the algorithm reaches a good approximation ($\varepsilon_s = 0.005$) in a small number of iterations ($k=40$). 
\begin{figure}[h!]
  \includegraphics[width=0.455\textwidth]{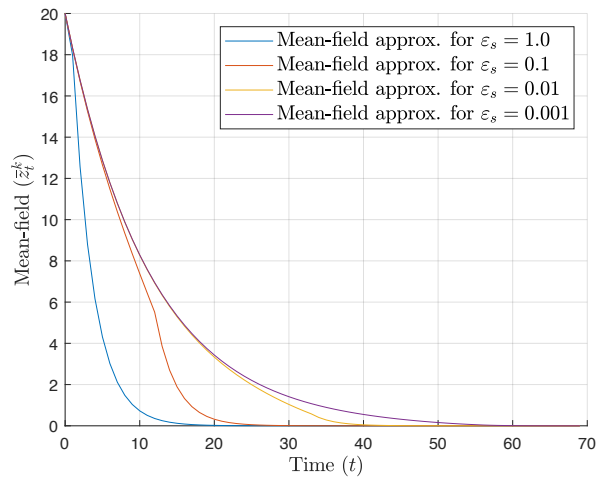}
  \caption{Mean-field approximation for different values of $\varepsilon_s$. {Notice the convergence of the mean-field trajectory as $\varepsilon_s$ decreases.}}
  \label{fig:approx_mf}
\end{figure}
Figure \ref{fig:avg_cost} depicts the average cost per agent for different numbers of agents, $N$ and for different values of $\varepsilon_s$. Each plot in the figure corresponds to a different number of agents $N$. As $N$ increases, the average cost is seen to decrease. This provides evidence that our conjecture, regarding policies obtained from the infinite population case when applied to the finite population case, is correct. The figure also shows that as the approximations become better, there is a decrease in the average cost per agent.

\begin{figure}[h!]
	\includegraphics[width=0.45\textwidth]{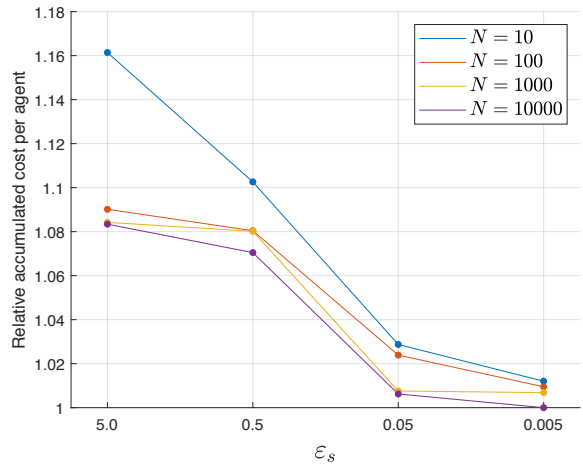}
	\caption{Relative accumulated cost per agent w.r.t. $\varepsilon_s$. Values are normalized to the lowest cost obtained ($N=10000$, $\varepsilon_s=0.005$).}
	\label{fig:avg_cost}
\end{figure}

\section{Concluding Remarks and Future Directions}
\label{sec:Conc}

We have developed a policy iteration algorithm for approximating equilibria in infinite-horizon LQ-MFGs with discounted cost. The main challenge in the algorithm development arises from the fact that the optimal control is computed backward in time. By investigating properties of the mean-field update operator (which we term the $\tau$-latent property), we can represent the mean-field trajectory at any given iteration by a finite set of parameters, resulting in a forward-in-time construction of the optimal control. The algorithm is provably convergent, with numerical results demonstrating the nature of convergence. The optimality of the computed equilibrium has been empirically studied; naturally, the optimality of the approximate equilibrium improves as the iteration index increases and the stopping threshold decreases. The results derived in this paper provide an algorithmic viewpoint of the nature of mean-field equilibria for LQ-MFGs. We believe that such insights will be useful for developing model-free RL  algorithms. Future work includes an extension to the multivariate case as well as consideration of a nonlinear/non-quadratic model (see \cite{saldi2018markov}).



\appendix
\label{sec:Append}

\begin{proof}[Proof of Lemma \ref{lem:LQT}]
Substituting $K_t = \gamma^t p_t$ and $g_t = \gamma^t \lambda_t$ into (21)--(26) of \cite{yazdani2018technical} yields (similar derivations in \cite{bertsekas1995dynamic} and on p. 234 of \cite{basar1999dynamic}),
\begin{align*}
& u_t = -\gamma b (c_u + b^2 \gamma p_{t+1})^{-1} ( p_{t+1} a z_t + \lambda_{t+1}), \\
& \lambda_t = \gamma a \left(1 - \gamma p_{t+1} b^2 (c_u + \gamma b^2 p_{t+1})^{-1} \right) \lambda_{t+1} - c_z \bar{z}_t, \\
& p_t = \gamma a^2 p_{t+1} + c_z - \gamma^2 a^2 b^2 p^2_{t+1} (c_u + \gamma b^2 p_{t+1})^{-1}.
\end{align*}
Since it is an infinite horizon problem, the Riccati equation will have a steady state solution. This can be written as,
\begin{align*}
u_t &= -\gamma b (a p  z_t + \lambda_{t+1})/(c_u +  \gamma b^2 p),\\
p &= \gamma a^2 p + c_z - \gamma^2 a^2 b^2 p^2/(c_u +  \gamma b^2 p),\\
\lambda_t &= \gamma a \left(1 - \gamma p b^2/(c_u +  \gamma b^2 p) \right) \lambda_{t+1} - c_z \bar{z}_t.
\end{align*}
Defining $g_p:= -\gamma b /(c_u + \gamma b^2 p)$ and $h_p:= a (1 + bpg_p)$, the above expressions for $u_t$ and $\lambda_t$ correspond to \eqref{eq:u_t} and \eqref{eq:lambda_t}, respectively. Rearranging and grouping $p$ terms in the above expression yields \eqref{eq:s}, with unique positive solution \eqref{eq:closedform}.
\end{proof}

\begin{proof}[Proof of Lemma \ref{lem:bdd_and_hurwitz}]
	1) First, we show that $\gamma \lvert h_p \rvert < 1$. It is well known 
	\cite{bertsekas1995dynamic} that the DARE for variables $(a_c, b_c, s_c, 
	r_c)$ 
	and average cost is $\hat{k} = a^2_c \hat{k} + s_c - a^2_c \hat{k}^2 b^2_c 
	/(r_c 
	+ b^2_c \hat{k})$. If $b_c \neq 0$ and $s_c > 0$, then this equation will 
	have a 
	positive solution. Moreover, the optimal feedback gain is $l_c := -a_c b_c 
	\hat{k}/(r_c + b_c^2 \hat{k})$ and the closed-loop gain $\lvert  a_c + b_c 
	l_c 
	\rvert < 1$. By using a change of variables with $a_c = \sqrt{\gamma} a, b_c 
	= 
	b, s_c = c_z, r_c = c_u/\gamma$, the equation \eqref{eq:s} is recovered with 
	$
	\hat{k} = p$. Hence there exists a unique positive solution for \eqref{eq:s}, given by \eqref{eq:closedform}. 
	Moreover 
	$\lvert  a_c + b_c l_c \rvert = \left\lvert \sqrt{\gamma}a - \frac{ 
	\gamma\sqrt{\gamma}a b^2 p}{c_u + \gamma b^2 p} \right\rvert =\sqrt{\gamma} 
	\left\lvert a(1 + b p g) \right\rvert= \sqrt{\gamma} \lvert h_p \rvert < 1$ 
	and 
	thus $\gamma \lvert h_p\rvert < \sqrt{\gamma}|h_p| < 1$. From 
	\eqref{eq:lambda_t}, recursing backwards yields $\lambda_0 = \left(\gamma 
	h_p\right)^t \lambda_t - c_z\sum_{s=0}^{t-1} \left( \gamma h_p \right)^s  
	\bar{z}_s$. Under the assumption $\lambda_0  = -c_z \sum_{s=0}^{\infty} 
	(\gamma 
	h_p)^s\bar{z}_{s}$, it follows that $\left(\gamma h_p\right)^t \lambda_t = -
	c_z 
	\sum_{s = t}^{\infty} \left( \gamma h_p \right)^s \bar{z}_s \Rightarrow 
	\lambda_t = -c_z \sum_{s = 0}^{\infty} \left( \gamma h_p \right)^s \bar{z}
	_{t+s}
	$. As $\bar{z} \in \ell^\infty$ there exists some $0 \leq \bar{z}_{\infty} < 
	\infty$ s.t. $\lvert \bar{z}_t \rvert \leq \bar{z}_{\infty}$. This 
	translates to 
	$\lvert \lambda_t \rvert \leq c_z \bar{z}_{\infty} / (1-\gamma h_p)$ for all 
	$t$. Hence $\lambda \in \ell^{\infty}$. \newline
	2) The closed-loop dynamics of $z_t$ under the cost-minimizing control are, 
	$z_{t+1} = a z_t + b g_p (p a z_t + \lambda_{t+1}) + \omega_t$. Using this 
	equation recursively, the expression for $z_t$ in terms of $z_0$ is, $z_t = 
	h_p^{t-1} z_0 + bg_p \sum^{t-1}_{s=0} h_p^s \lambda_{t-s} + 
	\sum^{t-1}_{s=0} 
	h_p^s\omega_{t-1-s}$. The expression for $\EE \big[ \big(z_t - \bar{z}_t 
	\big)^2 \big]$ is thus,
	\begin{align*}
	&\EE \big[ \left(z_t - \bar{z}_t \right)^2 \big] = \big( h_p^{t-1} \big)^2 
	({\sigma_0^2} + \nu^2_0) + \sigma_w^2 \sum^{t-1}_{s=0} h_p^s + \\
	& \bigg( bg_p \sum^{t-1}_{s=0} h_p^s \lambda_{t-s} - \bar{z}_t \bigg)^2 + 2 h_p^{t-1} \nu_0 \bigg( bg_p \sum^{t-1}_{s=0} h_p^s \lambda_{t-s} - \bar{z}_t \bigg).
	\end{align*}
	Assumption \ref{asm:boundforcontract} implies that $\lvert h_p \rvert < 1$. 
	Furthermore, since $\bar{z}\in\ell^{\infty}$ and $\lambda\in\ell^{\infty}$, 
	there exist constants $\bar{z}_{\infty},\lambda_{\infty}<\infty$ such that $\bar z_t\le\bar{z}_{\infty}$ and 
	$\lambda_t\le\lambda_{\infty}$ for all $t$. Thus,
	\begin{align*}
	&\EE \left[ \left(z_t - \bar{z}_t \right)^2 \right] \leq \left( h_p^{t-1} \right)^2 (\sigma^2_0 + \nu^2_0) + \sigma_w^2 t \\
	& + \left( bg_p t \lambda_{\infty} - \bar{z}_{\infty} \right)^2 + 2 h_p^{t-1} \nu_0 \left( bg_p t \lambda_{\infty} - \bar{z}_{\infty} \right).
	\end{align*}
	Similarly, $\EE \left[ u^2_t \right]$ is bounded above as,
	\begin{align*}
	&\EE \big[ u^2_t \big] = (a g_p p)^2 \big[ \big( h_p^{t-1} \big)^2 
	\big(\sigma^2_0 + \nu^2_0 \big) + \sum^{t-1}_{s=0} \sigma_w^2 h_p^{2s} \\
	& + bg_p h_p^s \lambda_{t-s} + 2 h_p^{t-1} \nu_0 bg_p h_p^s \lambda_{t-s} 
	\big] + g_p^2 \lambda^2_{t+1} \\
	& + 2a g_p^2 p \lambda_{t+1} \big( h_p^{t-1} \nu_0 + bg_p \sum^{t-1}_{s=0} 
	h_p^s \lambda_{t-1} \big) \\
	& \leq (a g_p p)^2 \left[ \left( h_p^{t-1} \right)^2 \left(\sigma^2_0 + \nu^2_0 \right) + \sigma_w^2 t \right. \\
	& \left. + \left( bg_p t \lambda_{\infty} \right) + 2 h_p^{t-1} \nu_0 bg_p t \lambda_{\infty} \right] + g_p^2 \left(\lambda_{\infty}\right)^2 \\
	& + 2a g_p^2 p \lambda_{\infty} \left( h_p^{t-1} \nu_0 + bg_p t \lambda_{\infty} \right).
	\end{align*}
	Since the optimal cost is, $\sum_{t=0}^{\infty} c_z \EE \big[ \gamma^t \big( 
	z_t - \bar{z}_t \big)^2 \big] + c_u \EE \big[ \gamma^t u^2_t \big]$, and $
	\sum_{t=0}^{\infty} t \gamma^t = \frac{\gamma}{(1-\gamma)^2},\,\sum_{t=0}
	^{\infty} \gamma^t \big( h_p^{t-1} \big)^2 = \frac{1}{h_p^2 \big( 1- \gamma 
	h_p^2 \big)},\,\sum_{t=0}^{\infty} t \gamma^t h_p^{t-1} = \frac{\gamma}{( 1 
	- 
	\gamma h_p )^2}$ it can be concluded that the optimal cost is bounded.
\end{proof}

\begin{proof}[Proof of Lemma \ref{lem:contract}]
	Let us define two mean-fields $\bar{z}, \hat{z} \in \ell^{\infty}$ and their next iterates $\bar{z}' = \mathcal{T} (\bar{z}), \hat{z}' = \mathcal{T} (\hat{z})$. Let us define the difference sequences $\delta_t = \bar{z}_t - \hat{z}_t$ and $\delta'_t = \bar{z}'_t - \hat{z}'_t$. Using \eqref{eq:MF-update}, the equation expressing the connection between $\delta_t$ and $\delta'_t$ is 
	$\delta'_{t+1} = h_p  \delta_t - c_zb g_p   \sum_{s=0}^{\infty} \left( \gamma h_p  \right)^{s} \delta_{t+1+s}$.
	Hence,
	\begin{align*}
	 \lVert \delta' \rVert_{\infty}& \leq 
	\lVert \delta \rVert_{\infty} \Big( \big\lvert h_p \big\rvert + \Big\lvert 
	c_z b g_p  \sum_{s=0}^{\infty} \big( \gamma h_p \big)^{s} \Big\rvert \Big)\\
	& \leq \lVert \delta \rVert_{\infty} \big( \big\lvert h_p \big\rvert + 
	\big\lvert c_zb g_p/(1 - \gamma h_p) \big\rvert \big) = \lVert \delta 
	\rVert_{\infty}T_p
	\end{align*}
	where the last inequality follows from $\gamma|h_p|<1$ (see Lemma \ref{lem:bdd_and_hurwitz}). By Assumption \ref{asm:boundforcontract}, $\mathcal{T}$ is a contraction.
\end{proof}

\begin{proof}[Proof of Theorem \ref{thm:MFE_T_equivalence}]
	Consider an MFE $( \mu^*, \bar{z}^*)$ that satisfies Definition 
	\ref{def:mfe}. 	Then, by definition, $\mu^* = \Phi(\bar{z}^*)$. The second 
	part of Definition \ref{def:mfe} states 
	that $\bar{z}^* = \Lambda(\mu^*)$. Thus $\bar{z}^* =\Lambda(\Phi(\bar{z}^*)) 
	= \mathcal{T}(\bar{z}^*)$. Now let us prove the converse.    Consider a 
	mean-field 
	$\bar{z}^*$ which is the fixed point of $\mathcal{T}$ i.e. $\bar{z}^* = 
	\mathcal{T}(\bar{z}^*)$. Then if $\mu^*$ 
	is the cost-minimizing control for $\bar{z}^*$ i.e. $\mu^* = \Phi(\bar{z}^*)
	$, $(\mu^*, \bar{z}^*)$ is an MFE since (1) $\mu^* = \Phi(\bar{z}^*)$, and 
	(2) $\Lambda(\mu^*) = \Lambda(\Phi(\bar{z}^*)) = \mathcal{T}(\bar{z}^*) = 
	\bar{z}^*$.
\end{proof}

\begin{proof}[Proof of Lemma \ref{lem:LTI2LTI}] For $t \in \{\tau,\tau+1,\ldots\}$ using \eqref{eq:MF-update} and the fact that $|r|\le 1$, we can write $x'_{t+1} = h_p x_t - c_zb g_p  \sum_{s=0}^{\infty} \left( \gamma h_p r 
	\right)^{s} r x_{t} = \hat r_px_{t} \label{eq:z_k+1_t+1}$ where $\hat r_p :=  h_p - \frac{c_zb g_p r}{1 - \gamma h_p r}$. Similarly, $x'_{t+2}$ is generated as $x'_{t+2} = \hat r_p x_{t+1} = \hat r_p r x_t$ 	for all $t \in \{\tau,\tau+1,\ldots\}$. Grouping terms, we obtain $x'_{t+2} = r x'_{t+1}$ for all $t \in \{\tau,\tau+1,\ldots\}$. 
\end{proof}

\begin{proof} [Proof of Theorem \ref{thm:approx_mfe_bound}]
	We first state and prove in Lemma \ref{lem:inf_norm_diff} below that the 
	expression in the stopping condition of the algorithm $\max_{0 \leq t \leq 
	k} 
	\big\lvert \bar{z}^{(k)}_t - \bar{z}^{(k-1)}_t \big\rvert$ is {equal} to {$
	\big\lVert\bar{z}^{(k)} - \bar{z}^{(k-1)} \big\rVert_{\infty}$}. This is due 
	to 
	the fact that $\bar{z}^{(k)}$ and $\bar{z}^{(k-1)}$ both follow stable 
	linear 
	dynamics for $ t \geq k$.
	\begin{lemma} \label{lem:inf_norm_diff} $\displaystyle\max_{0 \leq t \leq k} 
	\big\lvert \bar{z}^{(k)}_t - \bar{z}^{(k-1)}_t \big\rvert = 
	\big\lVert\bar{z}
	^{(k)} - \bar{z}^{(k-1)} \big\rVert_{\infty}$.
	\end{lemma}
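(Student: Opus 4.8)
The plan is to exploit the latent LTI structure of the two iterates. By construction the algorithm is initialized with a $0$-latent LTI mean-field, so repeated application of Lemma \ref{lem:LTI2LTI} guarantees that $\bar{z}^{(k)}$ is a $k$-latent LTI sequence and $\bar{z}^{(k-1)}$ is a $(k-1)$-latent LTI sequence, both with the \emph{same} constant $r$ satisfying $|r|\le 1$. I would first record this fact explicitly, since the matching value of $r$ across the two iterates is precisely what makes the argument work.

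Next, I would form the difference sequence $\delta_t := \bar{z}^{(k)}_t - \bar{z}^{(k-1)}_t$ and show that it too obeys the linear recursion $\delta_{t+1} = r\delta_t$ for all $t \ge k$. Indeed, for $t \ge k$ we also have $t \ge k-1$, so both $\bar{z}^{(k)}_{t+1} = r\bar{z}^{(k)}_t$ and $\bar{z}^{(k-1)}_{t+1} = r\bar{z}^{(k-1)}_t$ hold; subtracting gives $\delta_{t+1} = r\delta_t$. Hence $\delta$ is itself a $k$-latent LTI sequence with constant $r$, and by induction $|\delta_t| = |r|^{t-k}\,|\delta_k|$ for every $t \ge k$.

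The key consequence is that, because $|r|\le 1$, the tail magnitudes $|\delta_t|$ for $t \ge k$ are non-increasing and are all dominated by $|\delta_k|$, which is attained at $t=k$. I would therefore split the supremum defining $\big\lVert \bar{z}^{(k)} - \bar{z}^{(k-1)} \big\rVert_\infty$ into the finite block $0 \le t \le k-1$ and the tail $t \ge k$: the tail contributes exactly $|\delta_k|$, and since the index $t=k$ already lies in $\{0,\ldots,k\}$, we obtain $\big\lVert \bar{z}^{(k)} - \bar{z}^{(k-1)} \big\rVert_\infty = \max_{0\le t\le k}|\delta_t|$, which is the claim.

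There is no serious obstacle here; the argument is elementary once the latent LTI structure is invoked. The only point requiring care is the observation that the two iterates share the same constant $r$ --- exactly what Lemma \ref{lem:LTI2LTI} provides, since each application of $\Ts$ preserves $r$ --- because if the constants differed, $\delta$ would not satisfy a single linear recursion and its tail would not be controlled by its value at $t=k$.
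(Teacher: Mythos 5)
Your proof is correct and follows essentially the same route as the paper's: both exploit that $\bar{z}^{(k)}$ and $\bar{z}^{(k-1)}$ share the same latent constant $r$ with $|r|\le 1$, so the difference satisfies $\delta_t = r^{t-k}\delta_k$ for $t\ge k$ and the tail of the supremum is dominated by the term at $t=k$. Your write-up is in fact slightly more explicit than the paper's, which asserts the tail identity directly without spelling out why the two iterates share the same $r$ via Lemma \ref{lem:LTI2LTI}.
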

	\emph{Proof.}
		By definition $\big\lVert\bar{z}^{(k)} - \bar{z}^{(k-1)} 
		\big\rVert_{\infty} = \sup_{t \geq 0} \big\lvert \bar{z}^{(k)}_t - 
		\bar{z}
		^{(k-1)}_t \big\rvert$. 
		Hence for all $t \geq k$, $\big\lvert \bar{z}^{(k)}_t - \bar{z}^{(k-1)}_t \big\rvert = \big\lvert r^{t-k} \big(\bar{z}^{(k)}_{k} - \bar{z}^{(k-1)}_{k}\big) \big\rvert\leq \big\lvert\bar{z}^{(k)}_{k} - \bar{z}^{(k-1)}_{k}\big\rvert$. Using this property, $\big\lVert\bar{z}^{(k+1)} - \bar{z}^{(k)} \big\rVert_{\infty} = \sup_{t \geq 0} \big\lvert \bar{z}^{(k+1)}_t - \bar{z}^{(k)}_t \big\rvert = \max_{0 \leq t \leq k+1} \big\lvert \bar{z}
		^{(k+1)}_t - \bar{z}^{(k)}_t \big\rvert.\hspace{4em}\hfill\square$
		
	Since $\Ts$ is contractive with a fixed point of $\bar z^*$,
	\begin{align} \label{eq:contract_property}
	\big\lVert \bar{z}^{(k+1)} - \bar{z}^* \big\rVert_{\infty} \leq T_p 
	\big\lVert \bar{z}^{(k)} - \bar{z}^* \big\rVert_{\infty}
	\end{align}
for any $k=1,2,\ldots$. 
The algorithm terminates at iteration $k$ when $\big\lVert \bar{z}^{(k+1)} - 
\bar{z}^{(k)} \big\rVert_{\infty}  <  \varepsilon_s (1-T_p)/T_p$. 
Thus,
	\begin{align*}
	\varepsilon_s (1 - T_p)/T_p & > \big\lVert \bar{z}^{(k+1)} - \bar{z}
	^{(k)} \big\rVert_{\infty}\\ 
	& \ge \big\lVert \bar{z}^{(k)} - \bar{z}^* \big\rVert_{\infty} - \big\lVert 
	\bar{z}^{(k+1)} - \bar{z}^* \big\rVert_{\infty} \\
	& \ge \frac{1}{T_p}\big\lVert \bar{z}^{(k+1)} - \bar{z}^* 
	\big\rVert_{\infty} - \big\lVert \bar{z}^{(k+1)} - \bar{z}^* 
	\big\rVert_{\infty} \\
	& = \big(1 - T_p \big) \big\lVert \bar{z}^{(k+1)} - \bar{z}^* 
	\big\rVert_{\infty}/T_p.
	\end{align*}
	Hence, $\big\lVert \bar{z}^{(k+1)} - \bar{z}^* \big\rVert_{\infty} < 
	\varepsilon_s$ for any $\varepsilon_s > 0$. Now we prove the bound on the 
	number 
	of iterations. If the number of iterations is $k > K(\varepsilon_s)$, then, 
	\begin{align} \label{eq:step_to_calc_k_bound}
	&k > \big(\log \varepsilon_s - \log\big\lVert\bar{z}^{(0)} - \bar{z}^{*} 
	\big\rVert_{\infty}\big)/\log T_p \nonumber \\
	&\Leftrightarrow k \log T_p < \log \varepsilon_s - \log\big\lVert\bar{z}
	^{(0)} - \bar{z}^{*} \big\rVert_{\infty}  \nonumber \\
	&\Leftrightarrow T_p^k < \frac{\varepsilon_s}{\big\lVert\bar{z}^{(0)} - 
	\bar{z}^{*} \big\rVert_{\infty}} \Leftrightarrow T_p^k \big\lVert\bar{z}
	^{(0)} - 
	\bar{z}^{*} \big\rVert_{\infty} < \varepsilon_s.
	\end{align}
	The inequality flip in the second step is due to the fact that $T_p < 1$ 
	(Assumption \ref{asm:boundforcontract}) and $\log T_p < 0$. From  
	\eqref{eq:contract_property} $\big\lVert\bar{z}^{(k)} - \bar{z}^{*} 
	\big\rVert_{\infty} \leq T_p^k \big\lVert\bar{z}^{(0)} - \bar{z}^{*} 
	\big\rVert_{\infty}$ and using the inequality 
	\eqref{eq:step_to_calc_k_bound}, $
	\big\lVert\bar{z}^{(k)} - \bar{z}^{*} \big\rVert_{\infty} < \varepsilon_s$.%
\end{proof}

\bibliographystyle{IEEEtran} 
\bibliography{references,MARL_Springer_1,MARL_Springer_2,RL}

\end{document}